\documentclass{llncs}

\usepackage{booktabs}
\usepackage{multirow}
\usepackage[center,footnotesize]{caption}
\usepackage{url}
\usepackage{graphicx}
\usepackage{wrapfig}
\usepackage{amssymb}
\usepackage{subcaption}
\usepackage{appendix}
\usepackage{multicol}
\usepackage[para]{footmisc}
\usepackage{tikz}
\usetikzlibrary{matrix,arrows,decorations.pathmorphing,decorations.pathreplacing,calc, decorations.markings,decorations.text}

\RequirePackage{hyperref}
\RequirePackage[T1]{fontenc}
\RequirePackage{amsmath}
\RequirePackage{xspace}

\newcommand{\method}[1]{{\sc #1}}
\newcommand{\freq}{\ensuremath{\mathit{freq}}}
\newcommand{\surf}{{\sc{surf}}}
\newcommand{\invidx}[0]{\method{invidx}}
\newcommand{\invidxexh}[0]{\method{invidx-e}}
\newcommand{\invidxwand}[0]{\method{invidx-w}}
\newcommand{\idxd}[0]{\method{i-d$^n$}}
\newcommand{\idxdr}[0]{\method{i-d$^n$\^{r}$^n$}}
\newcommand{\idxdIrI}[0]{\method{i-d$^1$\^{r}$^1$}}

\newcommand{\col}{\ensuremath{\mathcal{D}}\xspace} 
\newcommand{\docs}{\ensuremath{N}\xspace}          
\newcommand{\ccol}{\ensuremath{\mathcal{C}}\xspace} 

\newcommand{\sentinel}{0}
\newcommand{\tokens}{\ensuremath{n}}
\newcommand{\docd}{\ensuremath{d}}
\newcommand{\qryq}{\ensuremath{Q}}
\newcommand{\score}{\ensuremath{\mathcal{S}}\xspace} 

\newcommand{\expandnode}{\method{expand}}
\newcommand{\termt}{\ensuremath{q}}
\newcommand{\FDq}{\ensuremath{F_{\col,\termt}}}
\newcommand{\FDvq}{\ensuremath{F_{\col_{v},\termt}}}
\newcommand{\fDq}{\ensuremath{f_{\col,\termt}}}
\newcommand{\fDvq}{\ensuremath{f_{\col_{v},\termt}}}
\newcommand{\deltafF}{\ensuremath{\delta_{\col_{v},\termt}}}
\newcommand{\colC}{\ensuremath{\mathcal{C}}}
\newcommand{\fdq}{\ensuremath{f_{\docd,\termt}}}
\newcommand{\wdt}{\ensuremath{w_{\docd,\termt}}}
\newcommand{\fqt}{\ensuremath{f_{Q,\termt}}}
\newcommand{\wqt}{\ensuremath{w_{Q,\termt}}}

\newcommand{\DARRAY}{\ensuremath{D}}     
\newcommand{\REP}{\ensuremath{R}}    
\newcommand{\REPHAT}{\ensuremath{\hat{R}}}    
\newcommand{\REPHATN}{\ensuremath{\hat{R}^n}}    
\newcommand{\SUF}{\ensuremath{SA}}   
\newcommand{\DOCLEN}{\ensuremath{L}}

\newcommand{\algname}[1]{\mbox{#1}}

\newcommand{\GREEDY}{\algname{GREEDY}}

\newcommand{\structname}[1]{\mbox{#1}}
\newcommand{\CSA}{\structname{CSA}}

\newcommand{\WT}{\structname{WT}}
\newcommand{\WTX}{\structname{WT-X}}
\newcommand{\WTDT}{\structname{WT-D}} 
\newcommand{\WTD}{\structname{WT-D$^n$}}
\newcommand{\WTDX}{\structname{WT-D$^{\ell}$}}
\newcommand{\WTREPT}{\structname{WT-\REP}} 

\newcommand{\WTREPTHAT}{\structname{WT-\REPHAT}} 
\newcommand{\WTREPHAT}{\structname{WT-\REPHAT$^n$}}
\newcommand{\WTREPHATI}{\structname{WT-\REPHAT$^1$}}
\newcommand{\WTREPHATX}{\structname{WT-\REPHAT$^{\ell}$}}
\newcommand{\WTDI}{\structname{WT-D$^1$}}

\newcommand{\DFSADA}{\structname{DF}}

\newcommand{\Order}[1]{\ensuremath{\mathcal{O}(#1)}}

\newcommand{\gb}[1]{{\mbox{$#1$~GiB}}}

\newcommand{\bm}{\mbox{\scriptsize BM25}\xspace}
\newcommand{\lmds}{\mbox{\scriptsize LMDS}\xspace}
\newcommand{\tfxidf}{\mbox{\scriptsize \textsc{TF}$\times$\textsc{IDF}}\xspace}

\DeclareMathOperator{\df}{\mbox{\scriptsize \textsc{df}}\xspace}

\DeclareMathOperator{\LM}{LM}

\newcommand{\sdsl}{{\sc{sdsl}}}

\newcounter{todocount}
\setcounter{todocount}{1}

%
%

%
%


%

%
\newcommand{\govtwo}[0]{\textsc{Gov2}}
\newcommand{\clueweb}[0]{\textsc{ClueWeb09}}
\newcommand{\trecqryA}[0]{\textsc{Trec2005}}
\newcommand{\trecqryB}[0]{\textsc{Trec2006}}

\newcommand{\rankor}{\mbox{\scriptsize Ranked-\textsc{OR}}\xspace}
\newcommand{\rankand}{\mbox{\scriptsize Ranked-\textsc{AND}}\xspace}

\tikzstyle{sleaf} = [inner sep=0mm,minimum size=3mm,rectangle,fill=brown!20,draw=brown!50,thick]
\tikzstyle{sdoc} = [inner sep=0mm,minimum size=3mm,rectangle,fill=gray!40,draw=gray!80,thick]
\tikzstyle{sinner} = [inner sep=0mm,minimum size=4mm,circle,fill=blue!20,draw=blue!50,thick]
\tikzstyle{svirtual} = [inner sep=0mm,minimum size=4mm,circle,fill=white!20,draw=blue!50,thick]
\tikzstyle{stext} = [color=brown]
\tikzstyle{scstlabel} = [sloped,midway,font=\scriptsize,above]

\def\vpos{ 0,1,2,3,4,5,6,7,8,9,10,11,12,13,14,15,16,17,18,19,20,21,22,23,24,25,26}
\def\lev{{2,0.3,2.5,1.5,4,2.5,4,0,4.3, 3.5, 5.8, 4.3, 5.8, 2.5, 5, 3.5, 5, 1.5, 3, 0.3, 5.3, 4, 5.3, 3, 4.3, 1.5, 3}}
\def\vtyp{{0,2,0,1,0,1,0,1,0,1, 0, 1, 0, 2, 0, 1, 0, 1, 0, 2, 0, 1, 0, 1, 0, 1, 0}}
\def\myedge{1,7,3,1,5,3,5,7,9,13,11,9,11,17,15,13,15,19,17,7,21,23,21,25,23,19,25}
\def\edgela{%
{"\$","","\$","\#","\$","O","LA..","","\$","\#","LA..","O","O..","","\#..","LA",%
 "LA..","LA","O..","","\$","\#","O..","LA","LA..","O","O.."}%
}
\def\sa{{13,12,3,8,11,2,7,6,5,0,10,1,4,9}}
\newcounter{ii}

\def\daperm{0,2,1,3,2,1,3,3,3,1,2,1,3,2}
\def\dapermZ{0,1,1,1,1}
\def\dapermO{2,3,2,3,3,3,2,3,2}
\def\dapermZZ{0}
\def\dapermZO{1,1,1,1}
\def\dapermOZ{2,2,2,2}
\def\dapermOO{3,3,3,3,3}
\newdimen\myx

\newcommand{\drawWTnode}[4]{
    \setcounter{ii}{-1}
    \foreach \d [count=\ii from 0] in #1{
        \node[color=gray, inner sep=0, minimum width=0.5cm,minimum height=0.3cm,font=\scriptsize] (#2_\ii) at (0.6*\ii cm, 0cm) {\d};
        \ifnum #3=#4 \else
            \node[color=black, inner sep=0, minimum width=0.5cm,minimum height=0.3cm,below] (#2_b\ii) at (#2_\ii.south) {}; 
        \fi
        \stepcounter{ii}
    }
    \ifnum #3=#4
        \path %
            let \p1 = ($(#2_\theii.east)-(#2_0.west)$) %
               ,\n1 = {(veclen(\x1,\y1))}
            in %
            node[inner sep=0,minimum width=\n1, minimum height=0.3cm,draw=white,thick,rounded corners=5pt,
                below right]
            (#2) at (#2_0.north west) {};
    \else
        \path %
            let \p1 = ($(#2_b\theii.east)-(#2_b0.west)$) %
               ,\n1 = {(veclen(\x1,\y1))}
            in %
            node[inner sep=0,minimum width=\n1, minimum height=0.3cm,draw=gray!80,thick,rounded corners=5pt,
                fill=gray!20,below right]
            (#2) at (#2_b0.north west) {};
        \foreach \d [count=\ii from 0] in #1{
            \pgfmathparse{int(2*(\d-#3) > (#4-#3)))}\let\mybit\pgfmathresult
            \node[color=black, inner sep=0, minimum width=0.5cm,minimum height=0.3cm,below] (#2_bit\ii) at (#2_\ii.south) {\mybit}; 
        }
    \fi
}

\newcommand{\printInterval}[4]{
    \coordinate (c1) at ($(#1_#2.north west)+(0,1.5ex)$);
    \coordinate (c2) at ($(#1_bit#2.south west)+(0,-0.5ex)$);
    \coordinate (c3) at ($(#1_#3.north east)+(0,1.5ex)$);
    \coordinate (c4) at ($(#1_bit#3.south east)+(0,-0.5ex)$);

    \path[fill opacity=0.3,fill=brown!80]  (c1.north west) rectangle (c4.south east);
    \draw[brown!80,thick]  (c1) -- (c2);
    \draw[brown!80,thick]  (c1) -- ($(c1)+(1.5ex,0)$);
    \node[below right,inner sep=0.3mm, font=\tiny,color=brown!80] at (c1) {#2};
    \draw[brown!80,thick]  (c3) -- (c4);
    \draw[brown!80,thick]  (c3) -- ($(c3)+(-1.5ex,0)$);
    \node[below left,inner sep=0.3mm, font=\tiny,color=brown!80] at (c3) {#3};
    \node[below,color=brown!80,font=\tiny,inner sep=0.3mm] at ($(c1)+0.5*(c3)-0.5*(c1)$) {#4};
}

\tikzstyle{swtlab} = [sloped,above,black,inner sep=0.5mm,font=\scriptsize]

\newcommand{\printWTD}{
    \begin{scope}
        \drawWTnode{\daperm}{v0}{0}{3} 
    \end{scope}
    \begin{scope}[yshift=-1.25cm,xshift=-2.5mm]
        \drawWTnode{\dapermZ}{v1}{0}{1} 
    \end{scope}
    \begin{scope}[yshift=-1.25cm,xshift=3.25cm]
        \drawWTnode{\dapermO}{v2}{2}{3} 
    \end{scope}
    \begin{scope}[yshift=-2.5cm,xshift=-3.5mm]
        \drawWTnode{\dapermZZ}{v3}{0}{0} 
    \end{scope}
    \begin{scope}[yshift=-2.5cm,xshift=5mm]
        \drawWTnode{\dapermZO}{v4}{1}{1} 
    \end{scope}
    \begin{scope}[yshift=-2.5cm,xshift=31.5mm]
        \drawWTnode{\dapermOZ}{v5}{2}{2} 
    \end{scope}
    \begin{scope}[yshift=-2.5cm,xshift=59mm]
        \drawWTnode{\dapermOO}{v6}{3}{3} 
    \end{scope}
    \draw[gray!80, thick, ->,>=stealth'] (v0.south) to node[swtlab]{0} (v1.north);
    \draw[gray!80, thick, ->,>=stealth'] (v0.south) to node[swtlab]{1} (v2.north);
    \draw[gray!80, thick, ->,>=stealth'] (v1.south) to node[swtlab]{0} (v3.north);
    \draw[gray!80, thick, ->,>=stealth'] (v1.south) to node[swtlab]{1} (v4.north);
    \draw[gray!80, thick, ->,>=stealth'] (v2.south) to node[swtlab]{0} (v5.north);
    \draw[gray!80, thick, ->,>=stealth'] (v2.south) to node[swtlab]{1} (v6.north);

    \printInterval{v0}{4}{9}{LA}
    \printInterval{v1}{2}{3}{LA}
    \printInterval{v2}{2}{5}{LA}
}

\def\reps{{7/{1,2,3}/0},{13/{3}/0},{15/{3}/1.0},{17/{1}/-1.0},{19/{1,2,3}/0},{25/{2}/0}}
\def\maptype{{"sleaf","sinner","svirtual"}}
\newcommand{\printsadatree}[1]{
    \setcounter{ii}{0}
    \foreach \v [count=\ii from 1, count=\lii from 0] in \vpos{
        \pgfmathparse{\vtyp[\v]}\let\ttt\pgfmathresult
        \pgfmathparse{\maptype[\vtyp[\v]]}\let\res\pgfmathresult
        \pgfmathparse{\lev[\v]}\let\level\pgfmathresult
        \pgfmathparse{ifthenelse((#1>0) || (\ttt>0),1,0)}\let\dodraw\pgfmathresult
        \ifnum \dodraw>0
            \node[\res] (v\v) at (0.4*\lii cm,-0.6*\level cm) {};
            \ifnum \ttt>0
                \stepcounter{ii}
                \node[blue] at (v\v) {{\scriptsize $v_{\theii}$}};
            \fi
         \fi
    }
    \foreach \v/\list/\xadj in \reps{
        \node[above,xshift=\xadj ex,inner sep=1mm] at (v\v.north) {$\{\textcolor{black}{\list}\}$};
    }
    \setcounter{ii}{0}
    \foreach \dest [count=\src from 0] in \myedge{
        \pgfmathparse{\vtyp[\src]}\let\ttt\pgfmathresult
        \pgfmathparse{\edgela[\src]}\let\lll\pgfmathresult
        \ifnum \src=\dest \else
            \pgfmathparse{ifthenelse(\lev[\src]>\lev[\dest],\src,\dest)}\let\w\pgfmathresult
            \pgfmathparse{ifthenelse(\lev[\src]>\lev[\dest],\dest,\src)}\let\v\pgfmathresult
            \pgfmathparse{ifthenelse(\src<\dest,"mirror","")}\let\mir\pgfmathresult
            \pgfmathparse{ifthenelse((#1>0) || (\ttt>0),1,0)}\let\dodraw\pgfmathresult
            \ifnum \dodraw>0
                \draw[black,->] (v\v) -- (v\w); 
                \path[text=brown, postaction={decorate,%
                      decoration={\mir,text along path, text align=center,%
                         raise=0.5ex,
                         text={\textcolor{brown}{\scriptsize\lll}}{}}}] (v\v) -- (v\w); 
            \fi
        \fi
    }
    \foreach \v in \vpos {
        \pgfmathparse{\vtyp[\v]}\let\type\pgfmathresult
        \ifnum \type = 0
            \pgfmathparse{\sa[\theii]}\let\suf\pgfmathresult
            \pgfmathparse{{\daperm}[\theii]}\let\ddd\pgfmathresult
            \node[color=brown] at (v\v) {\suf};
            \node[below,color=gray] at (v\v.south) {\ddd};
            \stepcounter{ii} 
        \fi
    }

}

\title{Compact 
Indexes for Flexible Top-$k$ Retrieval}

\author{Simon Gog
\and Matthias Petri
}

\institute{%
Department of Computing and Information Systems\\
The University of Melbourne, Parkville VIC 3010, Australia
}

\pagestyle{plain}
\begin{document}

\maketitle

\begin{abstract}
We engineer a self-index based retrieval system capable
of rank-safe evaluation of top-$k$ queries.
The framework generalizes the \GREEDY\ approach of Culpepper et 
al.~(ESA 2010) to handle multi-term queries, including over phrases.
We propose two techniques which significantly reduce
the ranking time for a wide range of popular Information Retrieval
(IR) relevance measures, such as $\tfxidf$ and $\bm$.
First, we reorder elements in the document array according
to document weight. Second, we introduce the repetition
array, which generalizes Sadakane's (JDA 2007)
document frequency structure to document subsets.
Combining document and repetition array, 
we achieve attractive functionality-space trade-offs.
We provide an extensive evaluation of
our system on terabyte-sized IR collections.

\end{abstract}

\section{Introduction}

Calculating the $k$ most relevant documents for a multi-term query $\qryq$ 
against a set of documents $\col$ is a fundamental problem -- the top-$k$ document 
retrieval problem -- in Information Retrieval (IR).
The relevance of a document $\docd$ to $\qryq$ is determined by evaluating
a similarity function $\score$ (e.g. \tfxidf\ or \bm). Naive exhaustive processing
evaluates $\score$ for each document $d$ in $\col$ and generates a full list of 
scores.
The top-$k$ documents in the list
are then reported. Algorithms which guarantee production of the
same top-$k$ results list as the exhaustive process are called \emph{rank-safe}.

The \emph{inverted index} is a highly-engineered data structure designed to solve this
problem. The index stores, for each unique term in $\col$, the list of documents $d_i$
containing that term. Queries are answered by processing the lists of all of the 
query terms. Advanced query processing schemes \cite{bch+03-cikm} process
lists only partially while remaining rank-safe. However, additional
work during construction time is required to avoid scoring non-relevant documents at
query time. Techniques used to speed up query processing include sorting 
lists in decreasing score order, or pre-storing score upper bounds 
for sets of documents which can then safely be skipped during query processing. 
These pre-processing steps introduce a dependency between the similarity
measure and the stored index. Changing the scoring function requires at
least partial reconstruction of the underlying index, which in turn reduces
the flexibility of the retrieval system.

Another family of retrieval systems is based on self-indexes~\cite{NavACMcs14}.
These systems support functionality not easily provided by inverted indexes,
such as efficient phrase search, and direct text extraction. 
Systems capable of single-term top-$k$ queries have been proposed~\cite{nn-soda12}
and have proven to work well in practice~\cite{kn-dcc13}.
Generalizing and extending these structures to support
multi-term queries and more complex similarity functions is essential to the
adaption of self-indexes in the context of IR. However, currently only simple heuristics 
which cannot provide rank-safe query processing exist~\cite{cps-sigir12}.

\emph{Our Contributions.} 
We propose, to the best of our knowledge, the first self-index based
retrieval framework capable of rank-safe evaluation of top-$k$ queries.
In addition to the functionality of self-indexes (such as text extraction and
phrase queries) it can process multi-term queries using complex IR relevance
measures on terabyte scale IR collections.
It is based on a generalization of the \GREEDY\ approach of 
Culpepper et al.~\cite{CUL:NAV:PUG:TUR:2010}. We suggest two
techniques to decrease the number of evaluated nodes in the
\GREEDY\ approach. The first is reordering of documents
according to their length (or other suitable weight), the
second is a new structure called the \emph{repetition array}, \REP.
The latter is derived from Sadakane's \cite{sada07jda} 
document frequency structure, and is used to
calculate the document frequency for subsets of documents.
We further show that it is sufficient to use only \REP\ and a subset
of the document array if query terms, which can also be phrases, are 
length-restricted.
Finally, we explore the properties of our proposal on two terabyte-scale IR 
collections. This is, to our knowledge, three orders of
magnitudes larger than in previous self-index based experiments.
Our source code and experimental setup is publicly available.

\emph{Paper outline.} 
In Section \ref{sec-notation} we introduce notation, a formal problem
definition, and examples of similarity measures. Section~\ref{sec-toolbox}
reviews essential data structures. Sections $\ref{sec-previous-work}$ and
\ref{sec-improvements} revisit, generalize, and improve the
\GREEDY\ method.
Finally we evaluate our proposals
in Section~\ref{sec-experiments} and conclude in Section~\ref{sec-conclusion}.

\section{Notation and Problem Definition} \label{sec-notation}

Let $\col'=\{\docd_1,\ldots,\docd_{\docs\!-\!1}\}$ be a collection of $\docs\!-\!1$
documents. Each $d_i$ is a string over an alphabet
\footnote{Note: In Information Retrieval $\Sigma$\ is usually a word alphabet
and in String Processing a character alphabet.}
$\Sigma'=[2,\sigma]$ 
and is terminated by the sentinel symbol `$1$', also represent as `\#'.
Adding the one-symbol document $\docd_{0}=\sentinel$ results in a collection $\col$
of $\docs$ documents. The concatenation 
$\ccol=\docd_{\pi(0)}\docd_{\pi(1)}\ldots\docd_{\pi(\docs\!-\!1)}$ 
is a string over $\Sigma=[0,\sigma]$, where $\pi$ is a permutation
of $[0,\docs\!-\!1]$ with $\pi(\docs\!-\!1)=0$.
We denote the length of a document $\docd_i$ with $|\docd_i|=n_{d_i}$, and
$|\ccol|=n$. See Fig.~\ref{fig-col} for a running example. 
In the ``bag of words'' model a query $\qryq=\{q_0,q_1,\ldots,q_{m-1}\}$ is an 
unordered set of length $m$. Each element $q_i$ is either
a \emph{term} (chosen from $\Sigma'$) or a \emph{phrase} (chosen
from ${\Sigma'}^p$ for $p>1$). We can now define our problem.

\newcounter{mydoccnt}
\def\myintcol{2,3,2,1,3,2,2,2,1,3,3,2,1,0}
\def\alphabet{{"\noexpand\$","\noexpand\#","LA","O"}}
\def\mydocs{0/3,4/8,9/12,13/13}
\def\mydocperm{{1,3,2,0}}
\tikzstyle{mynode} = [inner sep=0mm,minimum height=3ex,minimum width=4ex]
\setcounter{mydoccnt}{0}
\begin{figure}[ht]
\centering
\begin{tikzpicture}
    \foreach \intword [count=\ii from 1, count=\lii from 0] in \myintcol{
	    \node[mynode,stext] (col\lii) at (0.7*\lii cm,0) {%
            \pgfmathparse{\alphabet[\intword]}\pgfmathresult%
        };
        \node[above,mynode] (idx\lii) at (col\lii.north) {\textcolor{gray}{\lii}};
        \node[below,mynode] (intcol\lii) at (col\lii.south) {\intword};
        \ifnum \intword=1%
            \stepcounter{mydoccnt}
        \fi
    }
    \node[left] at (idx0.west) {\textcolor{gray}{$i=$}};
    \node[left] at (col0.west) {$\ccol^{word}=$};
    \node[left] at (intcol0.west) {$\ccol=$};
    \foreach \sp/\ep [count=\ii from 0] in \mydocs{
        \pgfmathparse{int(\mydocperm[\ii])}\let\pii\pgfmathresult
        \draw [decorate,decoration={brace,amplitude=5pt}]
               (intcol\ep.south east) -- (intcol\sp.south west) node [black,midway,yshift=-10pt] {
                   \footnotesize $d_\pii$};
    }
\end{tikzpicture}
\vspace{-1ex}
\caption{$\ccol$ is the concatenation of a document 
collection $\col$ for $\pi=[1,3,2,0]$. 
We use both words (as in $\ccol^{word}$) or integer identifiers
(as in $\ccol$) to refer to document tokens. 
 \label{fig-col}}
\end{figure}
\emph{Top-$k$ document retrieval problem.}
Given a collection \col, a query $\qryq$ of length $m$, and a similarity
measure $\score:\col\times{\Sigma'}^m\rightarrow\mathbb{R}$.
Calculate the top-$k$ documents of \col\ with regard to $\qryq$ and
$\score$. That is a sorted list of document identifiers
$\mathrm{T}=\{\tau_0,\ldots,\tau_{k-1}\}$, with
$\score(\docd_{\tau_i},\qryq)\geq \score(\docd_{\tau_{i+1}},\qryq)$ for $0\leq i<k$ and
$\score(\docd_{\tau_{k-1}},\qryq)\geq \score(\docd_{j},\qryq)$ for  $j\not\in\mathrm{T}$.

A basic similarity measure used in many self-index based document
retrieval systems (see \cite{NavACMcs14}),
is the \emph{frequency measure} $\score^{freq}$.
It scores $d$ by accumulating the \emph{term frequency} of each term.
Term frequency $\fdq$ is defined as the number of occurrences of
term $\termt$ in $\docd$;
e.g. 
$f_{\docd_1,\mbox{\footnotesize\textcolor{brown}{LA}}}=2$ in Fig. \ref{fig-col}.
In IR, more complex $\tfxidf$ measures also include two additional factors.
The first is the inverse of the \emph{document frequency} ($\df$), which
is the number of documents in $\col$ that contain $q$, defined $\FDq$;
e.g. $F_{\col,\mbox{\footnotesize\textcolor{brown}{LA}}}\!=\!3$. 
The second is the length of the document $n_d$. Due to space limitations, we only
present 
the popular Okapi \bm function: 
\begin{equation}
    S_{\qryq,\docd}^{\bm}=%
        \sum_{\termt\in \qryq}%
            \underbrace{
            \frac{(k_1+1)\fdq}{k_1\left(1-b+b\frac{n_{\docd}}{n_{\mbox{avg}}}\right)+\fdq}    %
            }_{=\wdt} \cdot %
        \underbrace{\fqt\cdot\ln\left(\frac{\docs-\FDq+0.5}{\FDq+0.5}\right)}_{=\wqt}
\label{eq-bm25}        
\end{equation}
where $n_{\mbox{avg}}$ is the mean document length, and $\wdt$ and $\wqt$ refer
to components that we address shortly.
Parameters $k_1$ and $b$ are commonly set to $1.2$ and $0.75$ respectively.
Note that the $\wqt$ part is negative for $\FDq>\frac{\docs}{2}$.
To avoid negative scores, real-world systems, such as Vigna's
MG4J \cite{BoVTREC2005} search engine, set $\wqt$ to a small
positive value ($10^{-6}$), in this case.
We refer to Zobel and Moffat \cite{zm06compsurv} for a survey on
IR similarity measures including \tfxidf, \bm, and \lmds.

\section{Data Structure Toolbox} \label{sec-toolbox}
We briefly described the two most important building blocks
of our systems, and refer the reader to Navarro's survey \cite{NavACMcs14} 
and references therein for detailed information.
A \emph{wavelet tree} (WT) of a sequence $X[0,n\!-\!1]$ over alphabet $\Sigma[0,\sigma\!-\!1]$
is a perfectly balanced binary tree of height $h=\lceil\log\sigma\rceil$, referred
to as $\WTX$. The $i$-th node of level $\ell\in [0,h\!-\!1]$ is associated with symbols
$c$ such that $\lceil c/2^{h-1-\ell}\rceil=i$. Node $v$, corresponding to 
symbols $\Sigma_v=[c_b,c_e] \subseteq [0,\sigma-1]$, represent the subsequence
$X_v$ of $X$ filtered by symbols in $\Sigma_v$.
Fig. \ref{fig-WT} depicts an example. Only the bitvector which
indicates if an element will move to the left or right subtree is stored at
each node;
that is, $\WTX$ is stored in $n\lceil\log\sigma\rceil$ bits.
Using only sub-linear extra space it is possible to
efficiently navigate the tree.
Let $v$ be the $i$-th node on level $\ell<h\!-\!1$, then method
$\expandnode(v)$ returns in constant time a node
pair $\langle u,w\rangle$, where $u$ is the $(2\cdot i)$-th and
$w$ the $(2\cdot i+1)$-th node on level $\ell+1$.
A range $[l,r]\subseteq [0,n\!-\!1]$ in $X$ can be
mapped to range $[l,r]_{v}$ in
node $v$ such that the sequence $X_v[l,r]_{v}$ 
represents $X[l,r]$ filtered by $\Sigma_v$.
Operation $\expandnode(v,[l,r]_{v})$ then returns in constant time
a pair of ranges $\langle[l,r]_{u},[l,r]_{w}\rangle$ 
such that the sequence $X_u[l,r]_{u}$ 
(resp. $X_w[l,r]_{w}$) represents $X[l,r]$ filtered 
by $\Sigma_u$ (resp. $\Sigma_w$).
Fig.~\ref{fig-WT} provides an example.
\begin{figure}
\centering
\begin{tikzpicture}
\printWTD
\end{tikzpicture}
\vspace{-1ex}
\caption{Wavelet tree over document array \DARRAY.
 Method $\expandnode(v_{root},[4,9])$ maps  
range $[4,9]$ (locus of \textcolor{brown}{LA})
to range $[2,3]$ in the left and range $[2,5]$ in the right child.
\label{fig-WT}}
\end{figure}
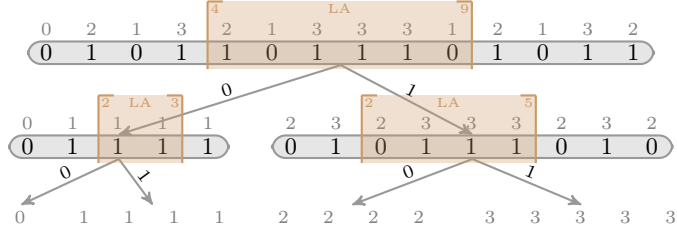

The \emph{binary suffix tree} (BST) of string $X[0,n\!-\!1]$ is the compact
binary trie of all suffixes of $X$. For each path $p$ from the root to a
leaf, the concatenation of the edge labels of $p$, corresponds
to a suffix. The children of a node are ordered lexicographically
by their edge labels. Each leaf is labeled with the starting
position of its suffix in $X$. Read from left to right, the leaves 
form the \emph{suffix array} (\SUF), which is a permutation of
$[0,n-1]$ such that such that 
$X[\SUF[i],n\!-\!1]<_{lex} X[\SUF[i\!+\!1],n\!-\!1]$ for all $0\leq i < n\!-\!1$.
We refer to Fig. \ref{fig-df}\ for an example. 
Compressed versions of \SUF\ and ST -- the 
compressed \SUF\ (CSA) and compressed ST (CST) -- use space essentially
equivalent to that of the compressed input, while efficiently supporting
the same operations.
For example, given a pattern $P$ of length $m$, the range $[l,r]$
in \SUF\ containing all suffixes start with $P$ 
or the corresponding node, that is the \emph{locus} of $P$, in the BST
is calculated in $\Order{m\log\sigma}$.

\newcounter{lastnode}
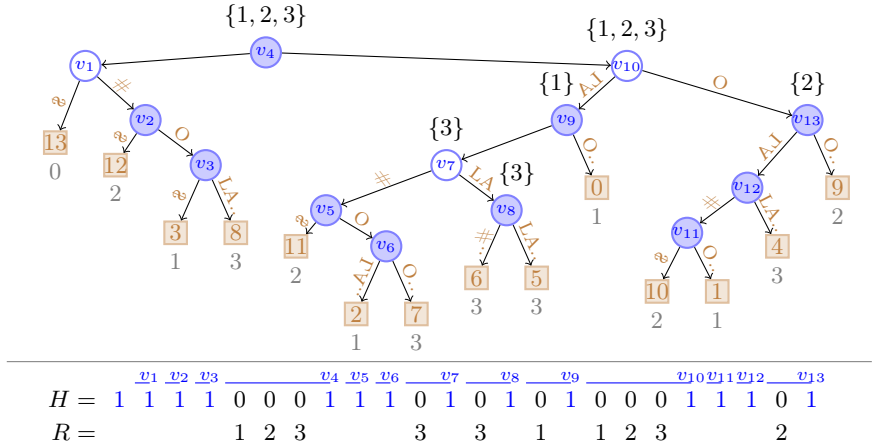
\begin{figure}
\centering
\begin{tikzpicture}
\printsadatree{1}
\end{tikzpicture}
\def\Harray{1,1,1,0,0,0,1,1,1,0,1,0,1,0,1,0,0,0,1,1,1,0,1}
\def\Hstyle{{"black","blue"}}
\def\reps{{1,2,3,3,3,1,1,2,3,2}}
\setcounter{ii}{0}
\setcounter{lastnode}{0}
\setcounter{mydoccnt}{0}
\begin{tikzpicture}
    \draw[thin,gray] (-1.5cm,0.5cm) -- (10cm,0.5cm);
    \foreach \h [count=\lii from 0, count=\ii from 1] in \Harray{
        \pgfmathparse{\Hstyle[\h]}\let\mystyle\pgfmathresult
        \node[\mystyle] (h\ii) at (0.4*\ii cm,0cm) {\h};
        \ifnum \h=0
            \pgfmathparse{\reps[\theii]}\let\ddd\pgfmathresult
            \node[below] (R\theii) at (h\ii.south) {\ddd};
            \stepcounter{ii}
        \else
            \stepcounter{mydoccnt}
            \node[above,inner sep=0mm,color=blue] at (h\ii.north) {{\scriptsize $v_{\themydoccnt}$}};
            \pgfmathparse{int(\thelastnode+1)}\let\iii\pgfmathresult
                \typeout{...\thelastnode}
            \draw[blue,thin] (h\ii.north) -- (h\iii.north west);
            \setcounter{lastnode}{\ii}
        \fi
    }
    \node[blue,dotted] (h0) at (0cm,0cm) {1};
    \node[below] (R0) at (h0.south) {\phantom{1}};
    \node[left] at (h0.west) {$H=$};
    \node[left] at (R0.west) {$R=$};
\end{tikzpicture}
\vspace{-1ex}
\caption{Top: BST of the example in Fig. \ref{fig-col}. 
The leaves form \SUF, the gray numbers below form \DARRAY. 
Bottom: Bitvector $H[0,2n\!-N\!-1]$ and repetition array $\REP$.
\label{fig-df}}
\end{figure}

\section{Revisiting and generalizing the \GREEDY\ framework} \label{sec-previous-work}

The \GREEDY\ framework of Culpepper et al. \cite{CUL:NAV:PUG:TUR:2010} 
consists of two parts: a CSA built over $\ccol$, and a WT over the 
\emph{document array} $\DARRAY[0..\docs-1]$; with
document array entry $\DARRAY[i]$ specifying the document in which
suffix $\SUF[i]$ starts. 
The grey numbers below each $\SUF[i]$ value in Fig.~\ref{fig-df} 
correspond to $\DARRAY[i]$.
A top-$k$ query using $\score^{\freq}$ with $m=1$ is answered as follows. 
For term $\termt=q_0$ the CSA returns a range 
$[l,r]$, such that all suffixes in $\SUF[l,r]$ are prefixed by 
$\termt$. Note that the size of the range corresponds 
to $\fDq$, the number of occurrences of $\termt$ in $\col$.
In $\WTDT$ the alphabet $\Sigma_v$ of each node represents
a subset $\col_v\subseteq \col$ of documents of $\col$;
and the size of the mapped interval $[l,r]_{v}$ equals
$f_{\col_v,\termt}$, the number of occurrences of $\termt$ in the sub-collection $\col_v$.
Each leaf $v$ in $\WTDT$ corresponds to a document $\docd$ in $\col$, such that
the size of $[l,r]_{v}$ equals term frequency $\fdq$.

To calculate the documents with maximal $\fdq$, i.e. maximizing $\score^{\freq}_{\termt,\docd}$, 
a max priority queue stores $\langle v,[l,r]_{v}\rangle$-tuples 
sorted according to interval size.
Initially, $\WTDT$'s root node and $[l,r]$ is enqueued.
The following process is then repeated until $k$ documents are reported
or the queue is empty: dequeue the top element $\langle v,[l,r]_{v}\rangle$.
If $v$ is a leaf, the corresponding document is reported.
Otherwise the node and interval is expanded and the two tuples
$\langle u,[l,r]_{u}\rangle$  and $\langle w,[l,r]_{w}\rangle$ 
 containing the expanded ranges are enqueued.

This iterative process returns the correct result
if the interval size $\fDq$ at a parent is never smaller
than that of a child ($f_{\col_u,q}$ or $f_{\col_w,q}$).
Note that the interval size $\fDq$ is never smaller than the 
maximum $\fdq$ value in the subtree.
We note that in general the algorithm is correct, if (1) the score estimate
$s_v$ at any node $v$ is larger than or equal to the maximum document score 
in $v$'s subtree and (2) the score estimates $s_u$ and $s_w$ 
of the children of $v$ are not larger than $s_v$.

For a wide range
of similarity measures (including $\tfxidf$, \bm, and \lmds)
theses two condition can be established by calculating $s_v$ as follows:
first, all document-independent parts, such as query weight 
$\wqt$ are determined. Then $n_d$ is estimated with 
the smallest document length $n_{\mbox{min}}$ in $\col$ if $v$
is a inner node.  Last, the maximal term frequency $f_{\docd,\termt}$ of each term $\termt_i$ is
set to $f_{\col,\termt_i}$, the size of interval $[l_i,r_i]_{v}$.
Since each interval size
is non-increasing when traversing down $\WTDT$ the algorithm
is correct, but not necessarily very efficient. 

The queue stores states of the form
$\langle s_v, v, \{[l_0,r_0]_{v},\ldots,[l_{m-1},r_{m-1}]_{v}\}\rangle$
sorted by $s_v$ in the multi-term version. 
Processing a state takes $\Order{m}$ time as $m$ intervals
are expanded.

\section{Improving Score Estimation} \label{sec-improvements}

The query time of \GREEDY\ is dependent on the time
to process a state in the \WT\ and on the number of
states processed. The latter is determined by the
quality of the score estimations. 

\subsection{Length Estimation by Document Relabelling}
We first improve document length estimation in $\col_v$ 
by replacing the collection-wide value $n_{min}$ by the smallest
document length $n_{\tilde\docd}$ in the sub-collection $\col_v$.
The computation of $n_{\tilde\docd}$ can be performed in
constant time if the document identifiers are assigned according
to the length of the documents. In this case, the smallest
document corresponds to smallest symbol in $\col_v$
which is $\Sigma_v[0]$. The latter can be
computed in constant time. Let $v$ be the $i$-th node
of level $\ell$ in $\WTD$ then $\Sigma_v[0]=i\cdot 2^{\lceil\log\docs\rceil-\ell-1}$.
The document lengths are maintained in an 
array $\DOCLEN[0,\docs-1]$.
In our example in Fig. \ref{fig-col} and
\ref{fig-WT} we have reordered the documents using
a permutation $\pi=[1,3,2,0]$. The additional space of
$N\log N+N\log n_{max}$ bits is negligible compared to
the size of the CSA and $\DARRAY$.

\subsection{Improved term frequency estimation}

Unit now we use the range size $\fDvq$ of term $\termt$ in 
$v$ to estimate an upper bound for the maximal term
frequency in a document $d\in \col_v$. 
Knowing the number of distinct documents in $\col_v$,
called $\FDvq$, helps to improve the upper bound
to the number of repetitions plus one: $\deltafF=\fDvq-\FDvq+1$.
In this section, we present a method that computes $\deltafF$ 
in constant time during $\WTDT$ traversal. 

The solution is built on top of Sadakane's \cite{sada07jda} 
document frequency structure (\DFSADA), which solves the problem solely
for $\ccol_v=\ccol$. 
We briefly revisit the structure:
first, a BST is built over
$\ccol$, see Fig.~\ref{fig-df}. The leaves are labeled
with the corresponding
documents, i.e. from left to right \DARRAY\ is formed.
The inner nodes are numbered from $1$ to $n-1$ in-order.
Each node $w_i$ holds a list $\mathcal{R}_{i}$, 
containing all documents which
occur in both subtrees of $w_i$. We refer to elements
in $\mathcal{R}_i$ as \emph{repetitions}.
Let $w_i$ be the locus of a term \termt\ 
in the BST and let $[l,r]$ be $w_i$'s interval.
Then the total number of repetitions in
$\DARRAY[l,r]$ can be calculated by accumulating
the length of all repetition lists in $w_i$'s subtree.
To achieve this, Sadakane generated a 
bitvector $H$ that concatenates the unary coding of 
the lengths of all $\mathcal{R}_i$: 
$H=10^{|\mathcal{R}_{0}|}10^{|\mathcal{R}_{1}|}1\ldots0^{|\mathcal{R}_{n-1}|}1$.
The subtree interval $[l,r]$ can be mapped
into $H$ via select operations: 
$[l',r']=[select_1(l,H),select_1(r,H)]$, since the 
accumulation of the list lengths equals the number of
zeros in $[l',r']$. The following example illustrates
the process:
interval $[4,9]$ corresponds to term $\termt=$\textcolor{brown}{LA}
and is mapped to $[l',r']=[select_1(4,H),select_1(9,H)]=[7,15]$ in $H$. 
It follows that there are $z_l=l'\!-\!l=3$ zeros in $H[0,l']$ and $z_r=r'\!-\!r=6$
in $H[0,r']$; thus there are $6\!-\!3=3$ repetitions in $\DARRAY[4,9]$. 
We can overestimate the maximal term frequency by
assuming that all repetitions belong to the same
document $\docd_x$ and add one for $\docd_x$
itself. So $\deltafF=4$ in this case. This overestimates
the maximal term frequency, which is $f_{d_3,\termt}=3$, 
by one. The interval size estimate would have been $6$

We now extend Sadakane's solution to work on all subsets $\col_v$. 
First, we concatenate all $\mathcal{R}_i$ 
and form the \emph{repetition array} $\REP[0,n\!-\!\docs-1]$ (again, see Fig. \ref{fig-df}),
containing the actual repetition value for each zero in $H$.
As above, using $H$ and $select_1$, we can map $[l,r]$ to
the corresponding range $[l'',r'']=[z_l,z_r-1]$ in $\REP$.
To calculate $\deltafF$ for $\col_v$ we
represent $\REP$ as a WT.
Now, we can traverse $\WTDT$ and $\WTREPT$
simultaneously, mapping $[l'',r'']$ to
$[l'',r'']_{v}$ in $\WTREPT$. The size
of $[l'',r'']_{v}\!+\!1$ equals $\deltafF$ since node $v$
contains only repetitions of $\col_v$.

\section{Space reduction} 
The space of $\REP$ can be reduced to array $\REPHAT$ by omitting all elements
belonging to the root $v_{ST}$ of the non-binary ST\ 
since we will never query the empty string.
In Fig. \ref{fig-df} all nodes with empty
path labels correspond to $v_{ST}$, i.e.
$v_1, v_4,$ and $v_{10}$. 
Hence $\REPHAT=\{3,3,1,2\}$ and we employ a
bitvector to map from the index domain of $\REP$
into $\REPHAT$.

Second, we note that the space of $\WTDT$ and $\WTREPTHAT$ can be
reduced if the length of query phrase is restricted
to length $\ell$. In this case, 
we can sort ranges in $\REPHAT$ which belong to nodes $v_i$,
where $v_i$ are the loci of patterns of length $\ell$.
Since all query ranges are aligned at borders of
sorted ranges, the interval sizes during processing
will not be affected.
In our running example, if $\ell=1$, we can sort
the elements of $v_9$'s subtree, resulting in
$\REPHAT^{1}=\{1,3,3,2\}$. The sorting will result
in better compressibility of $\WTREPHATX$.

Third, we observe that when using $\WTREPHATX$ \emph{only a part
of $\WTDT$} is necessary to calculate $\deltafF$.
If $\termt$ occurs more than once in $\col_v$,
$\WTREPHATX$ can be used to get $\deltafF$.
Hence, $\WTDT$ is only used to determine the
existence of $\termt$ in $\col_v$, and we only store need to store
the unique values inside ranges corresponding
to loci of $\ell$-length patterns. 
In addition, values
in these ranges can be sorted, since this does not change
the result of the existence queries. 
In our example we get $\DARRAY^1=\{3,0,1,2,0,1,2,0,1,2\}$;
one increasing sequence per symbol. A bitvector
is again used to map into $\DARRAY^{\ell}$.

\section{Experimental Study}\label{sec-experiments}

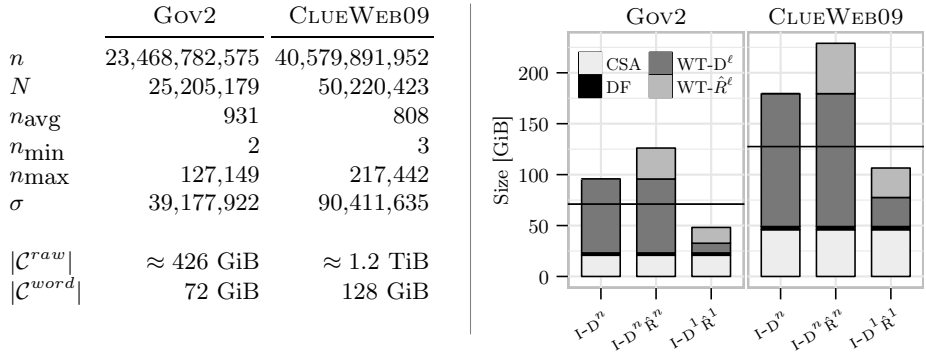
\begin{figure}[tb]
\begin{minipage}[b][4.4cm][t]{0.50\textwidth}
{\begin{tabular}{l@{\hspace{3mm}}rcr}
          & \multicolumn{1}{c}{\govtwo}          && \clueweb \\\cmidrule{2-2}\cmidrule{4-4}
$n$       & $23{,}468{,}782{,}575$ && $40{,}579{,}891{,}952$ \\
$\docs$   & $25{,}205{,}179$     && $50{,}220{,}423$ \\
$n_{\mbox{avg}}$ & $931$            && $808$  \\
$n_{\mbox{min}}$ & $2$              && $3$ \\
$n_{\mbox{max}}$ & $127{,}149$        && $217{,}442$ \\
$\sigma$  & $39{,}177{,}922$     && $90{,}411{,}635$ \\
          & && \\
$|\ccol^{raw}|$  &$\approx 426$ GiB && $\approx 1.2$ TiB\\
$|\ccol^{word}|$ &  $72$ GiB && $128$ GiB\\
\end{tabular}}
\end{minipage}
\begin{minipage}[b][4.4cm][t]{0.01\textwidth}
\begin{tikzpicture}
\draw[gray] (0,0) -- (0,4.4cm);
\end{tikzpicture}
\end{minipage}
\begin{minipage}[b][4.4cm][t]{0.49\textwidth}
\vspace{-0.2em}
\begin{tikzpicture}[x=1pt,y=1pt]
\definecolor[named]{fillColor}{rgb}{1.00,1.00,1.00}
\path[use as bounding box,fill=fillColor,fill opacity=0.00] (0,0) rectangle (166.22,144.54);
\begin{scope}
\path[clip] (  0.00,  0.00) rectangle (166.22,144.54);
\definecolor[named]{fillColor}{rgb}{1.00,1.00,1.00}

\path[fill=fillColor] (  0.00,  0.00) rectangle (166.22,144.54);
\end{scope}
\begin{scope}
\path[clip] ( 31.52,131.91) rectangle ( 98.27,144.54);

\path[] ( 31.52,131.91) rectangle ( 98.27,144.54);
\definecolor[named]{drawColor}{rgb}{0.00,0.00,0.00}

\node[text=drawColor,anchor=base,inner sep=0pt, outer sep=0pt, scale=  0.96] at ( 64.89,134.92) {\govtwo};
\end{scope}
\begin{scope}
\path[clip] ( 99.47,131.91) rectangle (166.22,144.54);

\path[] ( 99.47,131.91) rectangle (166.22,144.54);
\definecolor[named]{drawColor}{rgb}{0.00,0.00,0.00}

\node[text=drawColor,anchor=base,inner sep=0pt, outer sep=0pt, scale=  0.96] at (132.85,134.92) {\clueweb};
\end{scope}
\begin{scope}
\path[clip] ( 31.52, 34.81) rectangle ( 98.27,131.91);
\definecolor[named]{drawColor}{rgb}{0.75,0.75,0.75}

\path[draw=drawColor,line width= 1.5pt,line join=round,line cap=round] ( 31.52, 34.81) rectangle ( 98.27,131.91);
\definecolor[named]{drawColor}{rgb}{0.90,0.90,0.90}

\path[draw=drawColor,line width= 0.3pt,line join=round] ( 31.52, 48.87) --
	( 98.27, 48.87);

\path[draw=drawColor,line width= 0.3pt,line join=round] ( 31.52, 68.14) --
	( 98.27, 68.14);

\path[draw=drawColor,line width= 0.3pt,line join=round] ( 31.52, 87.42) --
	( 98.27, 87.42);

\path[draw=drawColor,line width= 0.3pt,line join=round] ( 31.52,106.70) --
	( 98.27,106.70);

\path[draw=drawColor,line width= 0.3pt,line join=round] ( 31.52,125.97) --
	( 98.27,125.97);

\path[draw=drawColor,line width= 0.8pt,line join=round] ( 31.52, 39.23) --
	( 98.27, 39.23);

\path[draw=drawColor,line width= 0.8pt,line join=round] ( 31.52, 58.50) --
	( 98.27, 58.50);

\path[draw=drawColor,line width= 0.8pt,line join=round] ( 31.52, 77.78) --
	( 98.27, 77.78);

\path[draw=drawColor,line width= 0.8pt,line join=round] ( 31.52, 97.06) --
	( 98.27, 97.06);

\path[draw=drawColor,line width= 0.8pt,line join=round] ( 31.52,116.33) --
	( 98.27,116.33);

\path[draw=drawColor,line width= 0.8pt,line join=round] ( 44.03, 34.81) --
	( 44.03,131.91);

\path[draw=drawColor,line width= 0.8pt,line join=round] ( 64.89, 34.81) --
	( 64.89,131.91);

\path[draw=drawColor,line width= 0.8pt,line join=round] ( 85.75, 34.81) --
	( 85.75,131.91);
\definecolor[named]{fillColor}{rgb}{0.93,0.93,0.93}

\path[fill=fillColor] ( 36.73, 39.23) rectangle ( 51.33, 47.25);
\definecolor[named]{fillColor}{rgb}{0.00,0.00,0.00}

\path[fill=fillColor] ( 36.73, 47.25) rectangle ( 51.33, 48.14);
\definecolor[named]{fillColor}{rgb}{0.47,0.47,0.47}

\path[fill=fillColor] ( 36.73, 48.14) rectangle ( 51.33, 76.12);
\definecolor[named]{fillColor}{rgb}{0.73,0.73,0.73}

\path[fill=fillColor] ( 36.73, 76.12) rectangle ( 51.33, 76.12);
\definecolor[named]{fillColor}{rgb}{0.93,0.93,0.93}

\path[fill=fillColor] ( 57.59, 39.23) rectangle ( 72.19, 47.25);
\definecolor[named]{fillColor}{rgb}{0.00,0.00,0.00}

\path[fill=fillColor] ( 57.59, 47.25) rectangle ( 72.19, 48.14);
\definecolor[named]{fillColor}{rgb}{0.47,0.47,0.47}

\path[fill=fillColor] ( 57.59, 48.14) rectangle ( 72.19, 76.12);
\definecolor[named]{fillColor}{rgb}{0.73,0.73,0.73}

\path[fill=fillColor] ( 57.59, 76.12) rectangle ( 72.19, 87.84);
\definecolor[named]{fillColor}{rgb}{0.93,0.93,0.93}

\path[fill=fillColor] ( 78.45, 39.23) rectangle ( 93.05, 47.25);
\definecolor[named]{fillColor}{rgb}{0.00,0.00,0.00}

\path[fill=fillColor] ( 78.45, 47.25) rectangle ( 93.05, 48.14);
\definecolor[named]{fillColor}{rgb}{0.47,0.47,0.47}

\path[fill=fillColor] ( 78.45, 48.14) rectangle ( 93.05, 51.81);
\definecolor[named]{fillColor}{rgb}{0.73,0.73,0.73}

\path[fill=fillColor] ( 78.45, 51.81) rectangle ( 93.05, 57.79);
\definecolor[named]{drawColor}{rgb}{0.00,0.00,0.00}
\definecolor[named]{fillColor}{rgb}{0.93,0.93,0.93}

\path[draw=drawColor,line width= 0.6pt,line join=round,fill=fillColor] ( 36.73, 39.23) rectangle ( 51.33, 47.25);
\definecolor[named]{fillColor}{rgb}{0.00,0.00,0.00}

\path[draw=drawColor,line width= 0.6pt,line join=round,fill=fillColor] ( 36.73, 47.25) rectangle ( 51.33, 48.14);
\definecolor[named]{fillColor}{rgb}{0.47,0.47,0.47}

\path[draw=drawColor,line width= 0.6pt,line join=round,fill=fillColor] ( 36.73, 48.14) rectangle ( 51.33, 76.12);
\definecolor[named]{fillColor}{rgb}{0.73,0.73,0.73}

\path[draw=drawColor,line width= 0.6pt,line join=round,fill=fillColor] ( 36.73, 76.12) rectangle ( 51.33, 76.12);
\definecolor[named]{fillColor}{rgb}{0.93,0.93,0.93}

\path[draw=drawColor,line width= 0.6pt,line join=round,fill=fillColor] ( 57.59, 39.23) rectangle ( 72.19, 47.25);
\definecolor[named]{fillColor}{rgb}{0.00,0.00,0.00}

\path[draw=drawColor,line width= 0.6pt,line join=round,fill=fillColor] ( 57.59, 47.25) rectangle ( 72.19, 48.14);
\definecolor[named]{fillColor}{rgb}{0.47,0.47,0.47}

\path[draw=drawColor,line width= 0.6pt,line join=round,fill=fillColor] ( 57.59, 48.14) rectangle ( 72.19, 76.12);
\definecolor[named]{fillColor}{rgb}{0.73,0.73,0.73}

\path[draw=drawColor,line width= 0.6pt,line join=round,fill=fillColor] ( 57.59, 76.12) rectangle ( 72.19, 87.84);
\definecolor[named]{fillColor}{rgb}{0.93,0.93,0.93}

\path[draw=drawColor,line width= 0.6pt,line join=round,fill=fillColor] ( 78.45, 39.23) rectangle ( 93.05, 47.25);
\definecolor[named]{fillColor}{rgb}{0.00,0.00,0.00}

\path[draw=drawColor,line width= 0.6pt,line join=round,fill=fillColor] ( 78.45, 47.25) rectangle ( 93.05, 48.14);
\definecolor[named]{fillColor}{rgb}{0.47,0.47,0.47}

\path[draw=drawColor,line width= 0.6pt,line join=round,fill=fillColor] ( 78.45, 48.14) rectangle ( 93.05, 51.81);
\definecolor[named]{fillColor}{rgb}{0.73,0.73,0.73}

\path[draw=drawColor,line width= 0.6pt,line join=round,fill=fillColor] ( 78.45, 51.81) rectangle ( 93.05, 57.79);
\definecolor[named]{fillColor}{rgb}{0.00,0.00,0.00}

\path[draw=drawColor,line width= 0.6pt,line join=round,fill=fillColor] ( 31.52, 66.61) -- ( 98.27, 66.61);
\end{scope}
\begin{scope}
\path[clip] ( 99.47, 34.81) rectangle (166.22,131.91);
\definecolor[named]{drawColor}{rgb}{0.75,0.75,0.75}

\path[draw=drawColor,line width= 1.5pt,line join=round,line cap=round] ( 99.47, 34.81) rectangle (166.22,131.91);
\definecolor[named]{drawColor}{rgb}{0.90,0.90,0.90}

\path[draw=drawColor,line width= 0.3pt,line join=round] ( 99.47, 48.87) --
	(166.22, 48.87);

\path[draw=drawColor,line width= 0.3pt,line join=round] ( 99.47, 68.14) --
	(166.22, 68.14);

\path[draw=drawColor,line width= 0.3pt,line join=round] ( 99.47, 87.42) --
	(166.22, 87.42);

\path[draw=drawColor,line width= 0.3pt,line join=round] ( 99.47,106.70) --
	(166.22,106.70);

\path[draw=drawColor,line width= 0.3pt,line join=round] ( 99.47,125.97) --
	(166.22,125.97);

\path[draw=drawColor,line width= 0.8pt,line join=round] ( 99.47, 39.23) --
	(166.22, 39.23);

\path[draw=drawColor,line width= 0.8pt,line join=round] ( 99.47, 58.50) --
	(166.22, 58.50);

\path[draw=drawColor,line width= 0.8pt,line join=round] ( 99.47, 77.78) --
	(166.22, 77.78);

\path[draw=drawColor,line width= 0.8pt,line join=round] ( 99.47, 97.06) --
	(166.22, 97.06);

\path[draw=drawColor,line width= 0.8pt,line join=round] ( 99.47,116.33) --
	(166.22,116.33);

\path[draw=drawColor,line width= 0.8pt,line join=round] (111.99, 34.81) --
	(111.99,131.91);

\path[draw=drawColor,line width= 0.8pt,line join=round] (132.85, 34.81) --
	(132.85,131.91);

\path[draw=drawColor,line width= 0.8pt,line join=round] (153.71, 34.81) --
	(153.71,131.91);
\definecolor[named]{fillColor}{rgb}{0.93,0.93,0.93}

\path[fill=fillColor] (104.69, 39.23) rectangle (119.29, 56.93);
\definecolor[named]{fillColor}{rgb}{0.00,0.00,0.00}

\path[fill=fillColor] (104.69, 56.93) rectangle (119.29, 58.12);
\definecolor[named]{fillColor}{rgb}{0.47,0.47,0.47}

\path[fill=fillColor] (104.69, 58.12) rectangle (119.29,108.43);
\definecolor[named]{fillColor}{rgb}{0.73,0.73,0.73}

\path[fill=fillColor] (104.69,108.43) rectangle (119.29,108.43);
\definecolor[named]{fillColor}{rgb}{0.93,0.93,0.93}

\path[fill=fillColor] (125.55, 39.23) rectangle (140.15, 56.93);
\definecolor[named]{fillColor}{rgb}{0.00,0.00,0.00}

\path[fill=fillColor] (125.55, 56.93) rectangle (140.15, 58.12);
\definecolor[named]{fillColor}{rgb}{0.47,0.47,0.47}

\path[fill=fillColor] (125.55, 58.12) rectangle (140.15,108.43);
\definecolor[named]{fillColor}{rgb}{0.73,0.73,0.73}

\path[fill=fillColor] (125.55,108.43) rectangle (140.15,127.49);
\definecolor[named]{fillColor}{rgb}{0.93,0.93,0.93}

\path[fill=fillColor] (146.40, 39.23) rectangle (161.01, 56.93);
\definecolor[named]{fillColor}{rgb}{0.00,0.00,0.00}

\path[fill=fillColor] (146.40, 56.93) rectangle (161.01, 58.12);
\definecolor[named]{fillColor}{rgb}{0.47,0.47,0.47}

\path[fill=fillColor] (146.40, 58.12) rectangle (161.01, 69.12);
\definecolor[named]{fillColor}{rgb}{0.73,0.73,0.73}

\path[fill=fillColor] (146.40, 69.12) rectangle (161.01, 80.28);
\definecolor[named]{drawColor}{rgb}{0.00,0.00,0.00}
\definecolor[named]{fillColor}{rgb}{0.93,0.93,0.93}

\path[draw=drawColor,line width= 0.6pt,line join=round,fill=fillColor] (104.69, 39.23) rectangle (119.29, 56.93);
\definecolor[named]{fillColor}{rgb}{0.00,0.00,0.00}

\path[draw=drawColor,line width= 0.6pt,line join=round,fill=fillColor] (104.69, 56.93) rectangle (119.29, 58.12);
\definecolor[named]{fillColor}{rgb}{0.47,0.47,0.47}

\path[draw=drawColor,line width= 0.6pt,line join=round,fill=fillColor] (104.69, 58.12) rectangle (119.29,108.43);
\definecolor[named]{fillColor}{rgb}{0.73,0.73,0.73}

\path[draw=drawColor,line width= 0.6pt,line join=round,fill=fillColor] (104.69,108.43) rectangle (119.29,108.43);
\definecolor[named]{fillColor}{rgb}{0.93,0.93,0.93}

\path[draw=drawColor,line width= 0.6pt,line join=round,fill=fillColor] (125.55, 39.23) rectangle (140.15, 56.93);
\definecolor[named]{fillColor}{rgb}{0.00,0.00,0.00}

\path[draw=drawColor,line width= 0.6pt,line join=round,fill=fillColor] (125.55, 56.93) rectangle (140.15, 58.12);
\definecolor[named]{fillColor}{rgb}{0.47,0.47,0.47}

\path[draw=drawColor,line width= 0.6pt,line join=round,fill=fillColor] (125.55, 58.12) rectangle (140.15,108.43);
\definecolor[named]{fillColor}{rgb}{0.73,0.73,0.73}

\path[draw=drawColor,line width= 0.6pt,line join=round,fill=fillColor] (125.55,108.43) rectangle (140.15,127.49);
\definecolor[named]{fillColor}{rgb}{0.93,0.93,0.93}

\path[draw=drawColor,line width= 0.6pt,line join=round,fill=fillColor] (146.40, 39.23) rectangle (161.01, 56.93);
\definecolor[named]{fillColor}{rgb}{0.00,0.00,0.00}

\path[draw=drawColor,line width= 0.6pt,line join=round,fill=fillColor] (146.40, 56.93) rectangle (161.01, 58.12);
\definecolor[named]{fillColor}{rgb}{0.47,0.47,0.47}

\path[draw=drawColor,line width= 0.6pt,line join=round,fill=fillColor] (146.40, 58.12) rectangle (161.01, 69.12);
\definecolor[named]{fillColor}{rgb}{0.73,0.73,0.73}

\path[draw=drawColor,line width= 0.6pt,line join=round,fill=fillColor] (146.40, 69.12) rectangle (161.01, 80.28);
\definecolor[named]{fillColor}{rgb}{0.00,0.00,0.00}

\path[draw=drawColor,line width= 0.6pt,line join=round,fill=fillColor] ( 99.47, 88.40) -- (166.22, 88.40);
\end{scope}
\begin{scope}
\path[clip] (  0.00,  0.00) rectangle (166.22,144.54);
\definecolor[named]{drawColor}{rgb}{0.00,0.00,0.00}

\node[text=drawColor,anchor=base east,inner sep=0pt, outer sep=0pt, scale=  0.84] at ( 24.40, 36.34) {0};

\node[text=drawColor,anchor=base east,inner sep=0pt, outer sep=0pt, scale=  0.84] at ( 24.40, 55.61) {50};

\node[text=drawColor,anchor=base east,inner sep=0pt, outer sep=0pt, scale=  0.84] at ( 24.40, 74.89) {100};

\node[text=drawColor,anchor=base east,inner sep=0pt, outer sep=0pt, scale=  0.84] at ( 24.40, 94.17) {150};

\node[text=drawColor,anchor=base east,inner sep=0pt, outer sep=0pt, scale=  0.84] at ( 24.40,113.44) {200};
\end{scope}
\begin{scope}
\path[clip] (  0.00,  0.00) rectangle (166.22,144.54);
\definecolor[named]{drawColor}{rgb}{0.00,0.00,0.00}

\path[draw=drawColor,line width= 0.6pt,line join=round] ( 27.25, 39.23) --
	( 31.52, 39.23);

\path[draw=drawColor,line width= 0.6pt,line join=round] ( 27.25, 58.50) --
	( 31.52, 58.50);

\path[draw=drawColor,line width= 0.6pt,line join=round] ( 27.25, 77.78) --
	( 31.52, 77.78);

\path[draw=drawColor,line width= 0.6pt,line join=round] ( 27.25, 97.06) --
	( 31.52, 97.06);

\path[draw=drawColor,line width= 0.6pt,line join=round] ( 27.25,116.33) --
	( 31.52,116.33);
\end{scope}
\begin{scope}
\path[clip] (  0.00,  0.00) rectangle (166.22,144.54);
\definecolor[named]{drawColor}{rgb}{0.00,0.00,0.00}

\path[draw=drawColor,line width= 0.6pt,line join=round] ( 44.03, 30.55) --
	( 44.03, 34.81);

\path[draw=drawColor,line width= 0.6pt,line join=round] ( 64.89, 30.55) --
	( 64.89, 34.81);

\path[draw=drawColor,line width= 0.6pt,line join=round] ( 85.75, 30.55) --
	( 85.75, 34.81);
\end{scope}
\begin{scope}
\path[clip] (  0.00,  0.00) rectangle (166.22,144.54);
\definecolor[named]{drawColor}{rgb}{0.00,0.00,0.00}

\node[text=drawColor,rotate= 30.00,anchor=base east,inner sep=0pt, outer sep=0pt, scale=  0.84] at ( 47.98, 22.21) {\idxd};

\node[text=drawColor,rotate= 30.00,anchor=base east,inner sep=0pt, outer sep=0pt, scale=  0.84] at ( 69.67, 22.69) {\idxdr};

\node[text=drawColor,rotate= 30.00,anchor=base east,inner sep=0pt, outer sep=0pt, scale=  0.84] at ( 90.39, 22.61) {\idxdIrI};
\end{scope}
\begin{scope}
\path[clip] (  0.00,  0.00) rectangle (166.22,144.54);
\definecolor[named]{drawColor}{rgb}{0.00,0.00,0.00}

\path[draw=drawColor,line width= 0.6pt,line join=round] (111.99, 30.55) --
	(111.99, 34.81);

\path[draw=drawColor,line width= 0.6pt,line join=round] (132.85, 30.55) --
	(132.85, 34.81);

\path[draw=drawColor,line width= 0.6pt,line join=round] (153.71, 30.55) --
	(153.71, 34.81);
\end{scope}
\begin{scope}
\path[clip] (  0.00,  0.00) rectangle (166.22,144.54);
\definecolor[named]{drawColor}{rgb}{0.00,0.00,0.00}

\node[text=drawColor,rotate= 30.00,anchor=base east,inner sep=0pt, outer sep=0pt, scale=  0.84] at (115.94, 22.21) {\idxd};

\node[text=drawColor,rotate= 30.00,anchor=base east,inner sep=0pt, outer sep=0pt, scale=  0.84] at (137.63, 22.69) {\idxdr};

\node[text=drawColor,rotate= 30.00,anchor=base east,inner sep=0pt, outer sep=0pt, scale=  0.84] at (158.35, 22.61) {\idxdIrI};
\end{scope}
\begin{scope}
\path[clip] (  0.00,  0.00) rectangle (166.22,144.54);
\definecolor[named]{drawColor}{rgb}{0.00,0.00,0.00}

\node[text=drawColor,rotate= 90.00,anchor=base,inner sep=0pt, outer sep=0pt, scale=  0.84] at (  8.80, 83.36) {Size [GiB]};
\end{scope}
\begin{scope}
\path[clip] (  0.00,  0.00) rectangle (166.22,144.54);
\definecolor[named]{drawColor}{rgb}{0.00,0.00,0.00}
\definecolor[named]{fillColor}{rgb}{1.00,1.00,1.00}

\path[draw=drawColor,line width= 0.6pt,line join=round,line cap=round,fill=fillColor] ( 35.67,115.54) rectangle ( 44.20,124.07);
\end{scope}
\begin{scope}
\path[clip] (  0.00,  0.00) rectangle (166.22,144.54);
\definecolor[named]{fillColor}{rgb}{0.93,0.93,0.93}

\path[fill=fillColor] ( 35.67,115.54) rectangle ( 44.20,124.07);

\path[] ( 35.67,115.54) --
	( 44.20,124.07);
\end{scope}
\begin{scope}
\path[clip] (  0.00,  0.00) rectangle (166.22,144.54);
\definecolor[named]{drawColor}{rgb}{0.00,0.00,0.00}
\definecolor[named]{fillColor}{rgb}{1.00,1.00,1.00}

\path[draw=drawColor,line width= 0.6pt,line join=round,line cap=round,fill=fillColor] ( 35.67,107.00) rectangle ( 44.20,115.54);
\end{scope}
\begin{scope}
\path[clip] (  0.00,  0.00) rectangle (166.22,144.54);
\definecolor[named]{fillColor}{rgb}{0.00,0.00,0.00}

\path[fill=fillColor] ( 35.67,107.00) rectangle ( 44.20,115.54);

\path[] ( 35.67,107.00) --
	( 44.20,115.54);
\end{scope}
\begin{scope}
\path[clip] (  0.00,  0.00) rectangle (166.22,144.54);
\definecolor[named]{drawColor}{rgb}{0.00,0.00,0.00}
\definecolor[named]{fillColor}{rgb}{1.00,1.00,1.00}

\path[draw=drawColor,line width= 0.6pt,line join=round,line cap=round,fill=fillColor] ( 62.41,115.54) rectangle ( 70.95,124.07);
\end{scope}
\begin{scope}
\path[clip] (  0.00,  0.00) rectangle (166.22,144.54);
\definecolor[named]{fillColor}{rgb}{0.47,0.47,0.47}

\path[fill=fillColor] ( 62.41,115.54) rectangle ( 70.95,124.07);

\path[] ( 62.41,115.54) --
	( 70.95,124.07);
\end{scope}
\begin{scope}
\path[clip] (  0.00,  0.00) rectangle (166.22,144.54);
\definecolor[named]{drawColor}{rgb}{0.00,0.00,0.00}
\definecolor[named]{fillColor}{rgb}{1.00,1.00,1.00}

\path[draw=drawColor,line width= 0.6pt,line join=round,line cap=round,fill=fillColor] ( 62.41,107.00) rectangle ( 70.95,115.54);
\end{scope}
\begin{scope}
\path[clip] (  0.00,  0.00) rectangle (166.22,144.54);
\definecolor[named]{fillColor}{rgb}{0.73,0.73,0.73}

\path[fill=fillColor] ( 62.41,107.00) rectangle ( 70.95,115.54);

\path[] ( 62.41,107.00) --
	( 70.95,115.54);
\end{scope}
\begin{scope}
\path[clip] (  0.00,  0.00) rectangle (166.22,144.54);
\definecolor[named]{drawColor}{rgb}{0.00,0.00,0.00}

\node[text=drawColor,anchor=base west,inner sep=0pt, outer sep=0pt, scale=  0.72] at ( 46.01,117.32) {\CSA};
\end{scope}
\begin{scope}
\path[clip] (  0.00,  0.00) rectangle (166.22,144.54);
\definecolor[named]{drawColor}{rgb}{0.00,0.00,0.00}

\node[text=drawColor,anchor=base west,inner sep=0pt, outer sep=0pt, scale=  0.72] at ( 46.01,108.79) {\DFSADA};
\end{scope}
\begin{scope}
\path[clip] (  0.00,  0.00) rectangle (166.22,144.54);
\definecolor[named]{drawColor}{rgb}{0.00,0.00,0.00}

\node[text=drawColor,anchor=base west,inner sep=0pt, outer sep=0pt, scale=  0.72] at ( 72.75,117.32) {\WTDX};
\end{scope}
\begin{scope}
\path[clip] (  0.00,  0.00) rectangle (166.22,144.54);
\definecolor[named]{drawColor}{rgb}{0.00,0.00,0.00}

\node[text=drawColor,anchor=base west,inner sep=0pt, outer sep=0pt, scale=  0.72] at ( 72.75,108.79) {\WTREPHATX};
\end{scope}
\end{tikzpicture}
\end{minipage}
\caption{Collection statistics for {\govtwo} and {\clueweb} (left) and 
memory breakdown of our indexes (right). $|\ccol^{raw}|$ denotes the original size of
the collection, while $|\ccol^{word}|$ denotes the size after parsing it.
A more detailed space breakdown of the indexes is available at
\href{http://go.unimelb.edu.au/6a4n}{http://go.unimelb.edu.au/6a4n}.}
\label{fig:statistics}
\end{figure}

\emph{Indexes and Implementations.} To evaluate our proposals we created the
SUccinct Retrieval Framework (\surf)\footnote{Source code, test
queries, and the scripts to reproduce the experiments are publicly available at
\url{https://github.com/simongog/surf/}.%
} 
which implements document retrieval specific components, like
Sadakane's \DFSADA\ structure. These components can be parametrized
by structures provided by the \sdsl\ library~\cite{gbmp2014sea}.
We assembled three self-index based systems,
corresponding to different functionality-space trade-offs.
All systems use the same \CSA\ and \DFSADA\ structure. The \CSA\
is implemented as an FM-index using a WT. This WT
as well as \DFSADA\ use a practical compressed bitvector~\cite{NAV:PRO:2012}
to minimize space, since all query related operations
on these components take only a fraction of a millisecond. 

Our first index (\idxd) adds \WTD, which uses an uncompressed bitvector
to allow fast \WT\ traversal. Our second structure (\idxdr) 
adds \WTREPHAT. It uses a compressed bitvector to compress the
increasing sequences in $\REPHATN$. To show a functionality-space trade-off
we also provide the previous index with a phrase length restriction of one,
named \idxdIrI.
The components $\WTDI$ and $\WTREPHATI$ both use compressed bitvectors
to minimize the space of the WTs.

As a reference point we also implemented a
competitive inverted file based search index (\invidx) which stores block-based
postings lists compressed using \method{OptPFD}~\cite{lb13spe,yds09www}. For 
each block, a representative is stored to allow efficient skipping. The
document ranking is calculated using two
list processing schemes. The first scheme -- \invidxwand\ -- uses the efficient
\method{Wand} list processing algorithm~\cite{bch+03-cikm}. However,
\method{Wand} and other advanced processing schemes require similarity measure 
specific pre-computation during construction time. A more flexible, but
less efficient processing scheme -- \invidxexh\ -- exhaustively evaluates
all postings in document-at-a-time order without either the burden or benefit of
score pre-computation.

\emph{Data Sets, Queries and Test Environment.}
We compare our index structures over two standard IR test collections: {\govtwo} 
and {\clueweb}. {\govtwo} is the test collection 
of the TREC 2004 Terabyte Track competition
and the {\clueweb} collection consists of the English text ``Category B'' subset 
of the ClueWeb09 dataset\footnote{http://lemurproject.org/clueweb09/}. 
To ensure reproducibility we extract the integer token sequence \colC\ for both 
collections from Indri\footnote{http://www.lemurproject.org/indri/} using default
parameters. 
No stop-words were removed from the collection. We evaluate our indexes 
using queries chosen from the TREC 2005 and TREC 2006 Terabyte 
track efficiency queries\footnote{http://trec.nist.gov/data/terabyte.html}. A total of
$1000$ queries were randomly sampled from both query sets, ensuring all query
terms are present in both test collections. Statistics for both collections,
given in Fig~\ref{fig:statistics} (left), are in line with other studies~\cite{v-wsdm13}.
We support ranked disjunctive (\rankor, at least one term must occur) 
and ranked conjunctive (\rankand, all terms must occur) retrieval. 
All indexes were implemented using C++11 and
compiled using GCC 4.8.1 with optimizations. Our machine was equipped with \gb{256} RAM and
we used one Intel Sandybridge core (E5-2680) running at $2.7$ Ghz. All indexes were
loaded into main memory prior to query processing.

\emph{Space Usage.} The space usage of our indexes is 
summarized in Fig.~\ref{fig:statistics} (right).
Note that our smallest index is 4 to 5 times larger than 
our reference inverted index. 
However, an
inverted index supporting phrase queries would require additional positional 
information, which would significantly increase its size.
The size of our integer parsing of size $n\lceil\log\sigma\rceil$ is shown as a horizontal line. 
The \CSA\ for both collections compresses
to roughly $30\%$ of the size of the integer parsing. The space for \DFSADA\ 
is negligible. The \WTD\ has the size of the integer parsing plus
$5\%$ overhead for a rank structure.
The size reduction from \REP\ to \REPHAT\ is substantial. For example, 
storing \REP\ for \clueweb\ requires \gb{123}, whereas \REPHAT\ requires 
only \gb{74}. Restricting the phrase length to one (\idxdIrI),
which makes it equivalent to a non-positional inverted index,
shrinks the structure below the original input size.

\begin{figure}[tb]
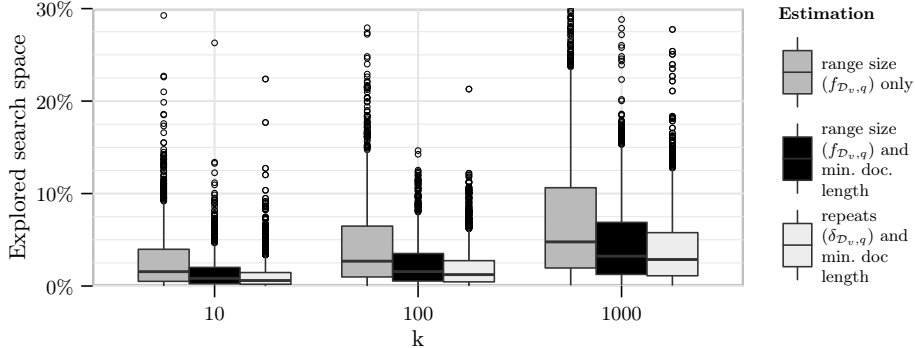

\centering

\vspace{-1cm}
\caption{Percentage of states evaluated $k=10, 100,$ and $1000$ during \rankor retrieval
using \bm for both \trecqryA\ and \trecqryB\ query sets for \govtwo.}
\label{fig:nodes_evaled}
\end{figure}

\emph{Processed States.}
In the first experiment, we measure the quantitative effects of
our improved score estimation during \GREEDY\ processing. 
We compare the range size ($\fDvq$)-only estimation to (a) range size 
estimation including document length estimation and (b) repeats
estimation ($\deltafF$) including document length estimation. 
Fig.~\ref{fig:nodes_evaled} shows the percentage of processed
states for all methods and $k=\{10, 100, 1000\}$ for both query sets 
on \govtwo\ using \bm\ \rankor\ processing. 
The percentage is calculated as the fraction of
states processed compared to the exhaustive processing of each query ($k=\docs$).
For all $k$, range size only estimation evaluates the most states on average.
For $k=10$, the median percentage of evaluated states for range size only
estimation is $1.6\%$. Adding document length estimation reduces the number of
evaluated states by half to $0.8\%$. Using $\deltafF$ instead of $\fDvq$ to estimate
the frequency further improved the percentage of evaluated states to $0.06\%$.
Similar effects can be observed for $k=100$ and $k=1000$. For $k=1000$, document
length estimation reduces the percentage from $5.1\%$ to $3.2\%$. Frequency
estimation using $\deltafF$ again marginally improves the number of evaluated
nodes to $2.8\%$. Overall, document length estimation has a larger impact on
\GREEDY\ than better frequency estimation via $\deltafF$. 

\emph{Disjunctive Ranked Retrieval.}
Next we evaluate the runtime performance of our indexes
\idxd, \idxdr, \idxdIrI\ for 
\bm\ \rankor query processing. Fig.~\ref{fig:run_times_or} (left)
shows runtime on \govtwo\ and both query sets for $k=\{10, 100, 1000\}$. 
We additionally included \invidxwand\ as a reference point for an efficient
inverted index. The latter uses additional
similarity measure dependent information and clearly outperforms all self-index
based indexes.
For $k=10$, it achieves a median runtime performance of less
than $20$ milliseconds, and performs well for other test cases.
Our fastest index, \idxd, is roughly $15$ times slower, achieving a median
runtime of $300$ milliseconds for $k=10$. The other two
indexes, \idxdr\ and \idxdIrI\ are approximately two times slower than \idxd.
This can be explained by the fact that \idxd\ uses an uncompressed \WT,
whereas the other indexes use compressed WTs to save space. Also note that
\idxdIrI\ is faster than \idxdr\ as ranges in \REPHAT$^1$ can be sorted,
which creates runs in the \WT\ which in turn allows faster state processing.
The mean time per processed state -- depicted in Fig.~\ref{fig:run_times_or} (right) --
highlights this observation.
For \idxd, the time linearly increases from $2$ to $5$ microseconds.
While there is a correlation to the number of query terms, rank operations occur
in close proximity -- cache friendly -- within \WTD, which increases performance. 
For the other indexes, we simultaneously access two WTs at different locations
to evaluate a single state. This reduces caching effects resulting in
a stronger dependence on query length.

\begin{figure}[tb]
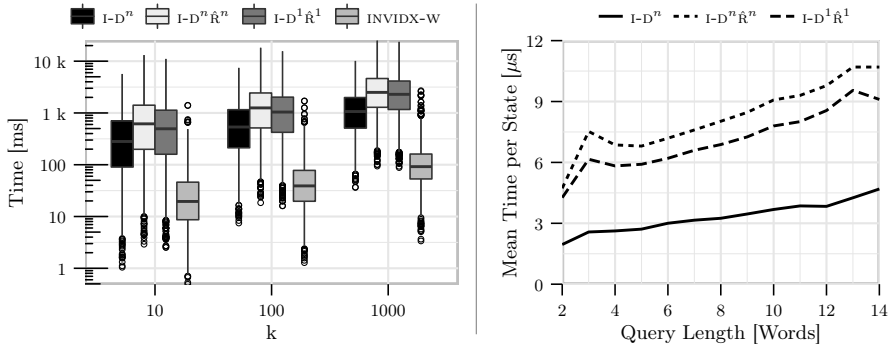

\begin{minipage}[b][4.4cm][t]{0.51\textwidth}

\end{minipage}
\caption{Runtime in milliseconds (left) and time to process one \WT\ state 
for $k=100$ (right) for \bm\ \rankor\ retrieval on \govtwo.}
\label{fig:run_times_or}
\end{figure}

\emph{Efficient Retrieval using Multi-Word Expressions.}
Next we demonstrate one example of additional functionality provided by
our self-indexed based systems. We use the concept of {\it strong associativity}~\cite{ch-cl90} which
defines the ratio of joint probability against the probabilities of a random co-occurrence
as an indicator of a multi word expression (MWE). We create MWEs on-the-fly during query time using the
text statistics provided by the CSA. We use a simple scheme
which greedily parses each query into MWEs. For
example, instead of processing the terms ``saudi'' and ``arabia''
independently, we instead process the MWE ``saudi arabia''
as a single query term. This can significantly reduce the query time of our
indexing schemes. We explore the efficiency of such a ``phrase'' processing
scheme in Fig.~\ref{fig:additional_functionality} (left).
The figure shows the runtime for queries from the \trecqryB\ query sets 
for \govtwo\ using \idxd. For all $k$, the runtime is reduced by an 
order of magnitude. This experiment shows how our system would support
retrieval tasks where the vocabulary does not consist of words but
a large number of entities, since MWE capture the latter.
Supporting MWE does not increase the size of our index,
but would substantially increase the size of an inverted index.

\begin{figure}[tb]
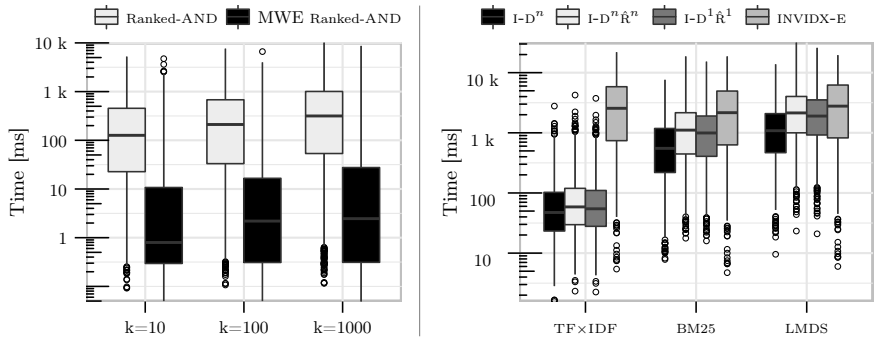

\begin{minipage}[b][4.4cm][t]{0.45\textwidth}

\end{minipage}
\caption{%
\rankand~\bm runtime for unparsed and MWE-parsed queries (right)
\rankor\ runtime for different similarity measures and indexes. 
}
\label{fig:additional_functionality}
\end{figure}

\emph{Flexible Ranked Document Retrieval.}
Another virtue of our proposal is scoring flexibility.
The indexes efficiently support a wide range of 
similarity measures, which can be changed and tuned
after the index is built, while optimized inverted indexes require
similarity measure-dependent pre-computation during
construction~\cite{bch+03-cikm}. 
If ranking functions are only chosen at query time,
inverted indexes exhaustively process postings lists to retrieve the top-$k$
documents. To highlight the benefit of flexibility, we compare our index
structures to \invidxexh\ using three different ranking formulas: \tfxidf, \bm,
and \lmds. Fig.~\ref{fig:additional_functionality} (right) shows the results
on \govtwo.
Interestingly, our index structures significantly outperform the exhaustive 
inverted index for \tfxidf. This can be attributed to the large influence
of the document length $n_\docd$ on $\score^{\tfxidf}$. Unlike \bm\
or \lmds, the final document score is linearly proportional to the actual
size of the document, thus document length estimation significantly reduces
the number of evaluated states. For \bm, the document length contribution
to the final document score is scaled in reference to the average document
length,
and thus has a smaller effect on the overall score of each document. While
our indexes still outperform \invidxexh, the difference is less significant
than for \tfxidf. 

\section{Conclusions and Future Work} \label{sec-conclusion}
We have presented a versatile self-index based retrieval framework
which allows rank-safe top-$k$ retrieval on multi-term queries
using complex scoring functions.
The proposed estimation methods have substantially
improved the query speed compared
to frequency-only score estimation. 
In our experiments we found that top-$k$ document retrieval
is still solved more efficiently by
inverted indexes, if augmented by similarity
measure-dependent pre-computations. 
However, self-index based systems provide
more functionality and thus can be used
in scenarios where the inverted index is not
applicable or slower.
We believe that \GREEDY\ performance can 
be further improved, e.g. by incorporating
range majority queries into the score
estimation.
We provide our framework as open-source
to enable researchers of different research
disciplines to profit from the functionality
provided by self-index based search systems.

\section*{Acknowledgments} 
We are grateful to Paul Cook, who pointed us to \cite{ch-cl90},
and Alistair Moffat and Andrew Turpin for fixing our grammar.
This research was supported by a Victorian Life Sciences
Computation Initiative~(VLSCI) grant number VR0052 on its Peak
Computing Facility at the University of Melbourne, an
initiative of the Victorian Government.
Both authors were funded by ARC DP grant DP110101743.

\bibliographystyle{splncs}
\bibliography{local}

\newpage
\appendix

\section{Correctness of \GREEDY}

\begin{lemma}
Given a document collection $\col$, a query $\qryq$,
and a similarity measure $\score$. Algorithm \GREEDY\ reports
the top-$k$ documents for $\qryq$ of $\col$ if
\begin{itemize}
    \item at every node $v$ the score estimate $s_v$ is 
          not smaller than the maximum document score
          in its subtree
    \item and the score estimates $s_u$ and $s_w$
          of $v$'s children is never larger than
          $s_v$.

\end{itemize}    
\end{lemma}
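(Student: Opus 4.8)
The plan is to argue by establishing a loop invariant on the priority queue, analogous to the classic argument for correctness of lazy $k$-selection / best-first search. At any point during the execution, let $Q$ denote the multiset of states currently in the priority queue. I would maintain the invariant that for every document $d$ of $\col$ that has not yet been reported, there is exactly one state $\langle s_v, v, \ldots\rangle$ in $Q$ whose node $v$ is an ancestor (in $\WTDT$) of the leaf corresponding to $d$; and for the leaf $\ell_d$ itself, the stored estimate equals the true score $\score(d,\qryq)$ (since at a leaf the subtree is a single document, the first hypothesis forces $s_{\ell_d} = \score(d,\qryq)$). This invariant holds initially because the only state is the root, which is an ancestor of every leaf, and it is preserved by each iteration: dequeuing a non-leaf $v$ and enqueuing its two children $u,w$ replaces the unique ancestor-state of each leaf below $v$ by the unique ancestor-state among $\{u,w\}$; dequeuing and reporting a leaf removes that document from the ``not yet reported'' set.

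Next I would use the second hypothesis to show that the sequence of estimates $s_v$ popped from the queue is non-increasing. This is the standard argument: the children $u,w$ of a popped node $v$ satisfy $s_u,s_w \le s_v$ by the second hypothesis, and every other element currently in the queue was already there when $v$ was at the front, hence has estimate $\le s_v$; so the next pop has value $\le s_v$. Combining this monotonicity with the first hypothesis gives the crucial bound: at the moment a leaf $\ell_d$ is popped and $d$ is reported, every document $d'$ not yet reported has its (unique) ancestor-state in the queue with estimate $s_{v'} \ge \score(d',\qryq)$ (first hypothesis), and $s_{v'} \le s_{\ell_d} = \score(d,\qryq)$ (monotonicity, since $v'$ is still in the queue). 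Therefore $\score(d,\qryq) \ge \score(d',\qryq)$ for every as-yet-unreported $d'$, i.e. each reported document dominates all documents reported later and all documents never reported.

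Finally I would assemble these facts into the statement. The algorithm reports documents in non-increasing score order (previous paragraph), each document is reported at most once (by the invariant, a document leaves the ``unreported'' set only when reported, and thereafter has no ancestor-state in the queue), and the process terminates because $\WTDT$ is finite. Hence after $k$ reports the output list $\{\tau_0,\ldots,\tau_{k-1}\}$ is sorted by score and $\score(d_{\tau_{k-1}},\qryq) \ge \score(d_j,\qryq)$ for every $j \notin \mathrm{T}$, which is exactly the definition of the top-$k$ answer. One edge case to dispatch is when fewer than $k$ documents exist (queue empties first), in which case all documents are reported in sorted order and the condition is vacuous.

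\textbf{Main obstacle.} The routine parts are the monotonicity of popped values and the termination/uniqueness bookkeeping. The step requiring the most care is making the ``unique ancestor-state per unreported document'' invariant precise and checking it is genuinely preserved by the $\expandnode$ step — in particular that expanding a node partitions the leaves of its subtree exactly between the two children (so no leaf is left without a representative and none gets two), and that a leaf's stored estimate really is forced to equal the true document score by the first hypothesis applied to the trivial one-leaf subtree. Once that invariant is nailed down, the domination argument is immediate.
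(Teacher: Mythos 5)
Your proof is correct and is essentially the paper's argument written in the forward direction: the paper proves the same statement by contradiction, resting on exactly your two facts — that every not-yet-reported document always has an ancestor state in the queue, and that when a document is reported every estimate still in the queue is at most the reported estimate, which combined with the first hypothesis bounds all remaining document scores. The one caveat (shared by the paper's own proof, so not a gap relative to it) is that the first hypothesis only forces $s_{\ell}\geq\score(\docd,\qryq)$ at a leaf $\ell$, not equality; exactness of the score at leaves is an implicit property of the algorithm rather than a consequence of the two stated conditions, and without it the domination step $\score(\docd',\qryq)\leq s_{v'}\leq s_{\ell}=\score(\docd,\qryq)$ would weaken to an inequality against an overestimate.
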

\begin{proof}
Assume that there is a unreported document $\docd'$ which has
a score larger than the $k$-th reported document $\docd$; i.e.
$s_{v'}>s_{v}$.
Then $\docd'$ was not in the queue when $\docd$ was
reported, since otherwise $\docd'$ would have been reported
first. Therefore an ancestor $\docd''$ of $\docd'$ has to be in the queue.
This is not possible since when $\docd$ was extracted all
score estimates in the queue were smaller or equal to the
score estimate $s_v$ of $\docd$. But the score estimate $s_{v''}$ of
$\docd''$ is larger or equal then $s_{v'}$ and hence larger
than $s_{v}$. This is a contradiction.
\qed
\end{proof}

\section{Additional Similarity Measures}

A simple $\tfxidf$ formulation given in the survey paper of Zobel and Moffat \cite{zm06compsurv}:

\begin{equation}
    S_{\qryq,\docd}^{\tfxidf}=%
        \frac{1}{n_\docd} %
        \sum_{\termt\in \qryq} \underbrace{\left(1+\ln\fdq\right)}_{=\wdt} \cdot %
        \underbrace{\ln\left(1+\frac{\docs}{\FDq}\right)}_{=\wqt}
\label{eq-tfxidf}        
\end{equation}

Another similarity function is based on an Language Model ($\lmds$) formulation:
\begin{equation}
    S_{\qryq,\docd}^{\LM}=%
        m\cdot\ln\left(\frac{\mu}{n_{\docd}+\mu}\right) +
        \sum_{\termt\in \qryq}%
            \underbrace{
            \ln\left(\frac{\fdq}{\mu}\cdot\frac{\tokens}{\FDq}+1\right)    %
            }_{=\wdt} \cdot %
        \underbrace{\fqt}_{=\wqt}
\end{equation}

Parameter $\mu$ is typically set to $2,\!500$.

\section{Collection Statistics and Experimental Results.}

Examples of multi-word expressions (MWEs) detected using
strong associativity and text statistics provided by the \CSA. 
Each detected MWE is shown in brackets.

\begin{multicols}{2}
\begin{itemize}
    \item (fort myers florida) (blue crown conure) (bird breeder) 
    \item map of (saudi arabia)
    \item the (sisterhood of the travel pants) movie
    \item (h1 b visa)
    \item (first tennessee bank) (web site)
    \item (are hot dogs) (healthy food) 
    \item firex (smoke alarm) (downer grove illinois) 
    \item (bayside raider) queens (football coach) 
\end{itemize}
\end{multicols}

Document statistics for both test collections are shown in Figure~\ref{fig:colstats}. During the creation of each 
collection an upper limit was defined to limit the size of documents.

\begin{figure}[tb]
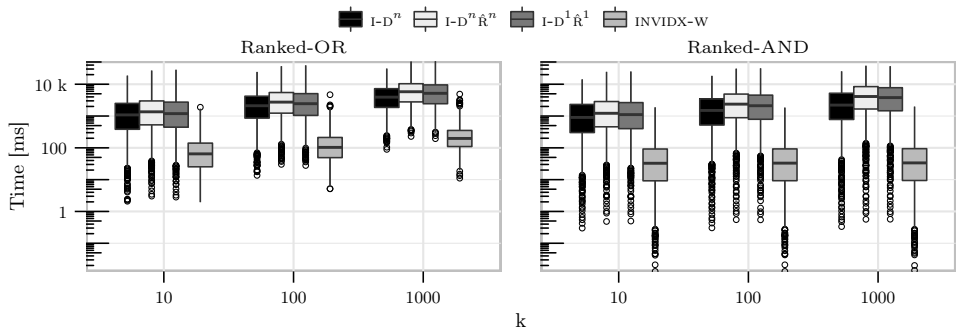


\caption{Disjunctive (\rankor) and conjunctive (\rankand) top-$k$ query time for $k=10, 100$ and $1000$ for \clueweb\ and query sets and \bm\ in milliseconds.}
\label{fig:timeclueweb}
\end{figure}

\end{document}